\documentclass[letterpaper, 10 pt, conference]{ieeeconf}
\IEEEoverridecommandlockouts
\usepackage{times}
\usepackage{fancyhdr}
\usepackage{float}
\usepackage{cite}
\usepackage{listings}
\usepackage{booktabs}
\usepackage{tabularx}

\usepackage{amsthm}
\usepackage{tikz}
\usepackage{amssymb}
\usepackage{dsfont}
\usepackage{mdwlist}
\usepackage{hyperref}
\hypersetup{colorlinks,linkcolor={blue},citecolor={blue}} 

\usepackage{amsmath,graphicx,xcolor}
\usepackage{subcaption}
\usepackage{subfiles}
\usepackage{import}
\usepackage{dsfont}
\usepackage{xfrac}
\usepackage[font=small,labelfont=bf]{caption}
\usepackage{mathtools}
\usepackage{algorithm}
\usepackage{algpseudocode}
\usepackage{adjustbox}
\usepackage{pifont}
\usepackage{sjmacros}
\usepackage{thmstyles}
\usepackage{stfloats}
\usepackage{cuted}
\usepackage{color}
\usepackage{multirow}
\usepackage{graphicx}
\usepackage{bm}
\usepackage{caption}
\usepackage{subcaption}

\title{\Large \bf Scalable Sensor Scheduling for Continuous-Discrete Kalman Filtering\\ via Information-Form Surrogate Dynamics}
\author{Hyeongmin Choe, SooJean Han
\thanks{*Authors are with School of Electrical Engineering, Korea Advanced Institute of Science \& Technology (KAIST), Daejeon, Republic of Korea. E-mail: \{\texttt{choe5500, soojean\}\texttt{@kaist.ac.kr}}}
}

\begin{document}

\maketitle

\begingroup
\renewcommand\thefootnote{}\footnotetext{%
This work has been submitted to the IEEE for possible publication.
Copyright may be transferred without notice, after which this version
may no longer be accessible.}
\endgroup

\begin{abstract}
% We study sensor scheduling for continuous-discrete Kalman filtering with Poisson measurement arrivals and propose information-form deterministic surrogate for scalable offline design. Unlike the covariance-form surrogate, the sensing rates enter through sensor-specific additive information incremetns, so in the transcribed nonlinear program the mixed state-input derivatives of the dynamics vanish. For any fixed feasible schedule, the inverse information surrogate yields a deterministic lower bound on the conditional-mean covariance, while the covariance-form surrogate yields an upper bound. This gives computable two-sided certificates for the expected estimation cost of the returned schedule under stochastic measurement arrivals. Numerical experiments show substantial reductions in the solver runtime, especially in many-sensor settings, while maintaining comparable realized Monte Carlo performance over the overlapping range.

 We study sensor scheduling for continuous-discrete Kalman filtering with Poisson measurement arrivals and propose an information-form deterministic surrogate for scalable offline design. Unlike the covariance-form surrogate, the sensing rates enter through sensor-specific additive information increments, eliminating mixed state-input derivatives in the transcribed nonlinear program and thereby yielding a simpler derivative structure. We further show that, together with the covariance-form surrogate, the proposed surrogate provides computable two-sided performance bounds for a given schedule under stochastic measurement arrivals. Numerical experiments demonstrate substantial computational savings, especially in many-sensor settings, while retaining comparable realized Monte Carlo performance and providing computable two-sided performance bounds for the returned schedule.
\end{abstract}

\section{Introduction}\label{sec:introduction}

Sensor scheduling allocates sensing and communication resources over time to maintain estimation accuracy under communication and energy constraints. In linear-Gaussian settings, much of the existing literature is formulated in discrete time, with sensing or transmission decisions synchronized to a common clock or periodic schedule~\cite{5765435,vitus2012efficient,mo2011sensor,mo2014infinite,dutta2023unified}. Continuous-time scheduling has also been studied~\cite{leny2011scheduling}, but asynchronous distributed sensing is more realistically described when transmissions are irregular rather than aligned to a common clock. In such settings, a natural estimator model is the \emph{continuous-discrete Kalman filter (CD-KF)}, in which the state evolves continuously while measurements arrive at discrete, irregular times~\cite{shi1999cdkf,jazwinski2007stochastic}.

A closely related recent work is that of Ahdab et al.~\cite{ahdab2025optimal}, who consider Poisson measurement arrivals in a CD-KF setting and derive a covariance-form deterministic surrogate for finite-horizon offline schedule design. The computational cost of that approach, however, depends strongly on the surrogate dynamics used to propagate the estimator state, because that structure determines the derivative couplings in the transcribed nonlinear program and the complexity of the associated Karush--Kuhn--Tucker (KKT) linear systems~\cite{betts1994issues,betts2010practical}. In the covariance form, the sensing rates multiply covariance-dependent Kalman-update terms; after direct transcription, this induces mixed state-control derivative terms and becomes increasingly costly as the number of sensors grows.

Moreover, a surrogate need not coincide with the conditional mean covariance of the underlying stochastic CD-KF. For offline scheduling, where a deterministic policy is optimized in advance but executed under random arrival realizations, it is therefore important to understand the surrogate's optimization structure and how it relates to the exact conditional-mean covariance trajectory and the resulting estimation performance. These motivate the search for a surrogate that is both structurally favorable for optimization and accompanied by a clear comparison to the exact conditional-mean covariance trajectory.

% Our paper
In this paper, we address these issues from the information domain. First, we derive an information-form surrogate for Poisson-rate CD-KF scheduling in which the sensing rates enter additively, avoiding nonlinear state-input coupling and yielding a cleaner derivative structure for gradient-based nonlinear programming. Second, we establish computable two-sided performance bounds for a given schedule by showing optimism of the proposed information-form surrogate and combining it with the conservatism property of the covariance-form surrogate. Numerical experiments show that the resulting method achieves substantial computational gains while providing these performance bounds.

Our paper is organized as follows. Section~\ref{sec:problem_setup} introduces the CD-KF sensor scheduling problem. Section~\ref{sec:method} develops our proposed information-form surrogate and the associated optimal control formulation. Section~\ref{sec:theory} presents a theoretical approximation analysis and deterministic performance-bound results. Section~\ref{sec:experiments} reports numerical experiments.
We conclude the paper in Section~\ref{sec:conclusion}.

\section{Problem Setup}\label{sec:problem_setup}
We consider a distributed sensor network with a global decision-maker that estimates the environment state based on received measurements from the individual sensors in the network.
The environment state is represented by $\xvect(t)\in\mathbb{R}^n$, and satisfies the linear stochastic dynamics
\begin{equation}\label{eq:system_dynamics}
  d\xvect(t) = A \xvect(t)\,dt + \Sigma_w\, dW(t), \hskip.2cm \xvect(0)\sim \mathcal{N}(\mvect_0,P_0),
\end{equation}
where $A\in\mathbb{R}^{n\times n}$, $\Sigma_w\in\mathbb{R}^{n\times r}$, $W(t)$ is an $r$-dimensional standard Wiener process, and $Q\triangleq \Sigma_w\Sigma_w^\top\succeq 0$ is the process noise spectral density.
% Extensions to nonlinear dynamics are possible (e.g., linearization, extended Kalman filter), but are not considered here to keep the paper focused.

Our network consists of $M{\,\in\,}\Nbb$ sensors.
Sensor $j\in\{1,\cdots,M\}$ generates measurements at random times according to a Poisson process $N_j(t)$ with possibly time-varying intensity $\lambda_j(t)\ge 0$.
We assume the transmission processes $\{N_j(t)\}$ are conditionally independent given the intensity trajectories $\{\lambda_j(t)\}$. 
Let $N(t)\triangleq\sum_{j=1}^M N_j(t)$, then at the $k$-th aggregated arrival time $\tau_k \triangleq \inf\{t:\sum_{j=1}^M N_j(t)\ge k\}$, the received measurement is
% Let the aggregated counting process be   $N(t)\triangleq \sum_{j=1}^M N_j(t)$ and $\lambda(t)\triangleq \sum_{j=1}^M \lambda_j(t)$, and define the $n$-th aggregated arrival time $\tau_n\triangleq \inf\{t: N(t)\ge n\}$.
% At each time $\tau_n$, exactly one sensor $j$ triggers (almost surely for Poisson processes); the received measurement is
\begin{equation}\label{eq:sensor_meas}
  \zvect_{j,k} = H_j \xvect(\tau_k) + \vvect_{j,k}, \quad \vvect_{j,k}\sim \mathcal{N}(0,R_j),
\end{equation}
with $H_j\in\mathbb{R}^{p_j\times n}$ and $R_j\in\mathbb{S}^{p_j}_{\succ 0}$.
We further define $\mathcal{F}_t$ to be the filtration generated by all measurements received by the central unit up to time $t$.

For notational simplicity, we present the time-invariant case. The formulation and the theoretical results extend directly to deterministic time-varying coefficients $A(t)$, $\Sigma_w(t)$, $H_j(t)$, and $R_j(t)$, under the same Assumption~\ref{asm:finite_moment}.

% \begin{assumption}\label{assum:detectable}
%     The pair $(A,H_{\text{agg}})$ is detectable, where $H_{\text{agg}}\triangleq [H_1^{\top}\hskip.1cm H_2^{\top}\cdots H_M^{\top}]^{\top}$ is the stack of all sensor measurement matrices.
% \end{assumption}

% \begin{assumption}\label{assum:lambda_regularity}
%     The intensity functions $\lambda_j$ are $\Fcal_t$-predictable and locally integrable in our considered time duration.
% \end{assumption}

\subsection{Background: Continuous-Discrete Kalman Filter}\label{subsec:problem_setup_cdkf}
% We first summarize the main preliminary background relevant for this paper: the continuous-discrete Kalman filter (CD-KF) with Poisson-distributed measurement arrivals.
% We write the conditional mean and covariance as
% \begin{gather*}
%     \mvect(t)\triangleq \mathbb{E}[\xvect(t)\mid\mathcal{F}_t],\\
%     P(t)\triangleq \mathbb{E}\big[(\xvect(t)-\mvect(t))(\xvect(t)-\mvect(t))^\top\mid\mathcal{F}_t\big].
% \end{gather*}
% where we will take $P(0)\equiv P_0\succ 0$.
% By~\eqn{system_dynamics}, we have between measurement arrivals $t\in[\tau_{N(t)},\tau_{N(t)+1})$,
% \begin{subequations}\label{eq:cdkf_predict}
%     \begin{align}
%       \dot \mvect(t) &= A \mvect(t), \label{eq:m_flow}\\
%       \dot P(t) &= A P(t) + P(t)A^\top + Q.
%     \end{align}
% \end{subequations}
% At an arrival time $\tau_n$ from a sensor $j$,
% % applying~\eqn{system_dynamics} and~\eqn{sensor_meas} to 
% the standard Kalman measurement update yields
% \begin{subequations}\label{eq:cdkf_jump}
%     \begin{align}
%       \mvect(\tau_n) &= \mvect(\tau_n^-)+K_j(\tau_n^-)\big(\zvect_{j,n}-H_j \mvect(\tau_n^-)\big), \label{eq:m_jump}\\
%       P(\tau_n) &= P(\tau_n^-) -K_j(\tau_n^-)H_jP(\tau_n^-),\notag
%     \end{align}
% \end{subequations}
% where $K_j(\tau_n^-)\triangleq P(\tau_n^-)H_j^\top\big(H_jP(\tau_n^-)H_j^\top+R_j\big)^{-1}$ is the Kalman gain from sensor $j$ and $(\cdot)^-$ denotes the left limit.
% Additional details and variants of the CD-KF can be found in~\cite{shi1999cdkf,arasaratnam2010cdkf,kulikov2017cdkf} and other references.
Given that our objective is to control estimator accuracy, we begin by summarizing the core principles of the continuous-discrete Kalman filter (CD-KF) with Poisson-distributed arrivals.
We write the conditional error covariance as 
\begin{equation*}
   P(t)\triangleq \mathbb{E}\big[(\xvect(t)-\mvect(t))(\xvect(t)-\mvect(t))^\top\mid\mathcal{F}_t\big],
\end{equation*}
where $\mvect(t)\triangleq \mathbb E[\xvect(t)|\mathcal F_t]$, and we will take $P(0)\equiv P_0\succ 0$.
By~\eqn{system_dynamics}, we have between measurement arrivals $t\in[\tau_{N(t)},\tau_{N(t)+1})$, \begin{equation}\label{eq:cdkf_predict}
    \dot P(t)=AP(t)+P(t)A^\top+Q.
\end{equation}
At an arrival time $\tau_k$ from a sensor $j$, the standard Kalman measurement update yields
\begin{equation}\label{eq:cdkf_jump}
    P(\tau_k)=P(\tau_k^{-})-g_j(P\left(\tau_k^{-})\right),
\end{equation}
where $g_j(X)\triangleq XH_j^\top(H_jXH_j^\top+R_j)^{-1}H_jX$ is the covariance update from sensor $j$ and $(\cdot)^{-}$ denotes the left limit. Additional details and variants of the CD-KF can be found in~\cite{shi1999cdkf,kulikov2017cdkf,arasaratnam2010cdkf} and other references.

\subsection{Sensor Scheduling Problem}\label{subsec:cd_problem_formulation}
In a \emph{CD-KF sensor scheduling problem with Poisson measurement arrivals}, the central processor designs a deterministic open-loop schedule
\begin{equation}
\lamvect(t)=\begin{bmatrix}\lambda_1(t)&\cdots&\lambda_M(t)\end{bmatrix}^\top,\qquad t\in[0,T],
\end{equation}
to minimize an expected estimation cost while satisfying resource constraints. The resulting schedule is then broadcast to the sensor network, and each sensor $j\in\{1,\dots,M\}$ adjusts its transmission rate according to $\lambda_j(t)$.

A standard pathwise estimation cost associated with a covariance trajectory $P(\cdot)$ is
\begin{equation}\label{eq:obj_pathwise}
    J(P)
    \triangleq
    \int_0^T \langle W(t),P(t)\rangle_F\,dt
    +
    \langle W_T,P(T)\rangle_F,
\end{equation}
where $W(t){\,\in\,}\mathbb S_+^n$ and $W_T{\,\in\,}\mathbb S_+^n$ are user-specified weighting matrices. The corresponding control objective is 
\begin{equation}\label{eq:expected_obj}
    \min_{\lamvect\in\mathcal U}\;
    \mathbb E\!\left[J\!\left(P^{\lamvect}\right)\right].
\end{equation}
% The schedule is constrained pointwise in time as
% \begin{equation}\label{ineq:rate_const}
%     C\lamvect(t)\le \mathbf b,\qquad \lamvect(t)\ge_e 0,
% \end{equation}
% for almost every $t{\,\in\,}[0,T]$, where $C{\,\in\,}\mathbb R_+^{n_c\times M}$ is a nonnegative full-row-rank matrix, $n_c\le M$ and $\mathbf b>_e0$, where the subscript $e$ denotes elementwise inequality.
Since the schedule is designed offline at the planning stage, we restrict attention to deterministic open-loop admissible schedules with pointwise linear resource constraints
\begin{align}\label{eq:admissible_set}
    \mathcal U
    \triangleq
    \big\{&
        \lamvect:[0,T]\to\mathbb R^M\textrm{ measurable }:\notag
        \\& C\lamvect(t)\le_e \mathbf b,\;\lamvect(t)\ge_e 0
        \ \text{for a.e. } t\in[0,T]
    \big\},
\end{align}
where $C\in\mathbb R_{+}^{n_c\times M}$, $\mathbf b>_e0$, and each column of $C$ contains at least one strictly positive entry. Here, the subscript $e$ denotes elementwise inequality.
Under this standing assumption, every admissible schedule is componentwise essentially bounded.
Indeed, defining
\begin{equation*}
    \bar \lambda_j \triangleq \min_{i:C_{ij}>0}\frac{b_i}{C_{ij}},\quad j=1,\cdots,M,
\end{equation*}
one has $0\le \lambda_j(t)\le \bar\lambda_j$ for almost every $t\in[0,T]$. This is natural in practice, as it simply requires that every sensor participate in at least one finite resource constraint.

The problem formulation in this section can be viewed as a stochastic continuous-discrete counterpart of sensor scheduling problems studied for the standard discrete-time Kalman filter; see, e.g.,~\cite{dutta2023unified}. The objective in~\eqref{eq:obj_pathwise} covers several practically relevant criteria. For example, choosing $W(t){\,=\,}\theta(t)I$ with $\theta(t)\ge0$ yields a weighted total estimation error criterion, while choosing $W(t){\,\equiv\,}0$ and $W_T{\,=\,}I$ yields a terminal error criterion. Likewise, the linear constraint in~\eqref{eq:admissible_set} subsumes a range of resource models under the rate interpretation. 
% sensor-wise peak-rate constraints ($C{\,=\,}I$), a global sum-budget constraint ($C{\,=\,}\mathbf 1^\top$), and group-coupled constraints.
For instance, when $M=3$, the $C$ matrix
\[
C=
\begin{bmatrix}
1&1&0\\
0&1&1
\end{bmatrix}
\]
models two overlapping local coupling constraints, one for sensors $(1,2)$ and one for sensors $(2,3)$.

\section{Information-form Surrogate}\label{sec:method}

This section develops the information-form representation of the CD-KF and the associated deterministic surrogate dynamics used for scheduling.

\subsection{CD-KF Information Process}
We will first derive the information form of the CD-KF introduced in Section~\ref{sec:problem_setup}.

Since $P_0{\,\succ\,}0$, the continuous covariance propagation preserves positive definiteness on inter-arrival intervals.
The Kalman measurement update in~\eqref{eq:cdkf_jump} also maps $\mathbb S_{++}^n$ into itself. Hence $P(t){\,\in\,}\mathbb S_{++}^n$ almost surely for all $t\in[0,T]$, and the information matrix $Y(t)\triangleq P(t)^{-1}{\,\in\,}\mathbb{S}^n_{\succ 0}$ is well-defined.

From \eqref{eq:cdkf_predict}-\eqref{eq:cdkf_jump} and standard matrix derivative identities, the information matrix satisfies the following dynamics.
Between arrivals, $t\in[\tau_{N(t)},\tau_{N(t)+1})$,
\begin{equation}\label{eq:info_predict}
  \dot Y(t) = -Y(t)A - A^\top Y(t) - Y(t)QY(t),
\end{equation}
and at an arrival time $\tau_k$ from a sensor $j$,
\begin{equation}\label{eq:info_jump}
  Y(\tau_k) = Y(\tau_k^-)+ S_j,\quad S_j\triangleq H_j^\top R_j^{-1}H_j.
\end{equation}
Here, $S_j$ can be thought of as the information increment from sensor $j$. 

\subsection{Information-Form Surrogate Dynamics}\label{subsec:info_rde_surrogate}
To obtain a deterministic surrogate, we replace the random measurement jumps by their rate-weighted average effect. Section~\ref{sec:theory} shows that this can be interpreted as dropping a Jensen gap term from the conditional mean dynamics. We refer to the resulting dynamics as the \emph{information-form surrogate}.

\begin{align}\label{eq:info_rde_mf}
    \dot \ymf(t) =& -\ymf(t)A-A^\top \ymf(t)-\ymf(t)Q\ymf(t)\notag
    \\&+\sum_{j=1}^M \lambda_j(t) S_j,\quad \ymf(0)=Y(0).
\end{align}
To highlight the structure of the proposed dynamics and facilitate the analysis, define $\psi(Y)\triangleq -YA-A^\top Y-YQY.$
   Then the information-form surrogate~\eqref{eq:info_rde_mf} becomes
\begin{equation*}
   \dot\ymf(t)=\psi(\ymf(t))+\sum_{j=1}^M \lambda_j(t)S_j. 
\end{equation*}
Beyond differentiability, the key structural advantage is that the rates enter through sensor-specific matrices $S_j$, rather than covariance-dependent coefficients. This avoids state-dependent control directions in the input channel and simplifies the derivative structure of the discretized nonlinear program.

We therefore solve the deterministic surrogate problem: 
\begin{align}
    \min_{\ymf,\lamvect}\hskip.2cm
    &
    \int_0^T \langle W(t),\ymf(t)^{-1}\rangle_F\,dt
    +\langle W_T,\ymf(T)^{-1}\rangle_F\notag
    \\
    \text{s.t.}\hskip.2cm
    &\dot \ymf(t)=\psi(\ymf(t))+\sum_{j=1}^M \lambda_j(t)S_j
    \hskip.1cm\text{for a.e. }t\in[0,T], \notag\\
    &\ymf(0)=Y(0)\in\mathbb S_{++}^n, \notag\\
    &\lamvect\in\mathcal U, \label{prob:continuous}
\end{align}
where $W(t),W_T$ and $\mathcal U$ are as defined in Section~\ref{sec:problem_setup}.

% Here, on any interval on which $\ymf(t)\in\mathbb{S}_{++}^n$, the inverse-state objective is well defined; throughout the paper we consider feasible trajectories that remain in $\mathbb S_{++}^n$ on $[0,T]$.

% Defining $U_{\lamvect}(t)\triangleq\sum_{j=1}^M \lambda_j(t)S_j$ and $P_{\textrm{info}}(t)\triangleq \ymf(t)^{-1}$, we obtain the Riccati equation $\dot P_{\textrm{info}}=AP_{\textrm{info}}+P_{\textrm{info}}A^\top+Q-P_{\textrm{info}}U_{\lamvect}P_{\textrm{info}},P_{\textrm{info}}(0)\succ0$. Since every admissible schedule $\lamvect\in\mathcal U$ is bounded, standard Riccati theory yields $P_{\textrm{info}}(t)\in\mathbb S_{++}^n$ for all $t\in[0,T]$, therefore the inverse-state objective in~\ref{prob:continuous} is well defined.

For any $\lamvect\in\mathcal U$, the matrix-valued coefficient $U_{\lamvect}(t)\triangleq\sum_{j=1}^M\lambda_j(t)S_j$ is bounded and measurable on $[0,T]$. By defining $P_{\textrm{info}}(t)\triangleq \ymf(t)^{-1}$ we obtain the Riccati equation 
\begin{equation*}
  \dot P_{\textrm{info}}=AP_{\textrm{info}}+P_{\textrm{info}}A^\top+Q-P_{\textrm{info}}U_{\lamvect}P_{\textrm{info}},
\end{equation*}
with $P_{\textrm{info}}(0)\succ0$. Standard Riccati theory yields  $P_{\textrm{info}}(t)\in\mathbb S_{++}^n$ on $t\in[0,T]$, and hence $\ymf(t)\in\mathbb S_{++}^n$ on $[0,T]$. Accordingly, the inverse-state objective in~\eqref{prob:continuous} is well defined.

\section{Theoretical Analysis}\label{sec:theory}

Despite the structural advantages of the information-form surrogate~\eqref{eq:info_rde_mf}, it does not generally coincide with the exact conditional mean trajectory of the CD-KF information process. In this section, we derive complementary comparisons in the information and covariance domains and then combine them into a two-sided bound for the conditional mean covariance trajectory under a given schedule.

Throughout this section, fix an admissible deterministic schedule $\lamvect\in\mathcal U$, and let all exact and surrogate trajectories below be driven by this schedule unless stated otherwise. By the standing assumption on $C$ in Section~\ref{subsec:cd_problem_formulation}, $\lamvect$ is componentwise essentially bounded on $[0,T]$. This makes the following moment condition mild for the CD-KF processes considered here.
\begin{assumption}\label{asm:finite_moment}
    For every $t\in[0,T]$, the random matrices $Y(t)$ and $P(t)=Y(t)^{-1}$ are well defined and satisfy
    \begin{equation*}
        \mathbb E\left[\Vert Y(t)\Vert_F^2+\Vert P(t) \Vert_F \middle| Y(0)\right]<\infty.
    \end{equation*}
\end{assumption}

We define the conditional means
\begin{equation*}
\bar Y(t)\triangleq\mathbb E[Y(t)\mid Y(0)],\quad \bar P(t)\triangleq \mathbb E\left[P(t)\middle| P(0)\right].
\end{equation*}
%  $\bar Y(t)\triangleq\mathbb E[Y(t)\mid Y(0)]$ and $\bar P(t)\triangleq \mathbb E\left[P(t)\middle| P(0)\right]$.
 For every $t\in[0,T]$, both $\bar Y(t)$ and $\bar P(t)$ belong to $\mathbb S_{++}^n$ almost surely, because $Y(t),P(t)\in\mathbb S_{++}^n$ almost surely and positive definiteness is preserved under conditional expectation. Moreover, since $Y(0)=P(0)^{-1}$, the $\sigma$-algebras generated by $Y(0)$ and $P(0)$ coincide, so conditioning on $Y(0)$ or $P(0)$ is interchangeable.

\subsection{Information vs. Covariance Schedule-wise Comparisons}
\begin{lemma}[Jensen-gap Characterization]\label{lem:exact_means}
    The exact conditional means satisfy
    \begin{align}
\dot{\bar Y}(t)=&-\bar Y(t)A-A^\top \bar Y(t)-\bar Y(t)Q\bar Y(t)\notag
\\&+\sum_{j=1}^M \lambda_j(t)S_j-\Delta_Y(t),\label{eq:exact_mean_info}
\\
\dot{\bar P}(t)=&A\bar P(t)+\bar P(t)A^\top+Q\notag
\\&-\sum_{j=1}^M \lambda_j(t)g_j(\bar P(t))-\Delta_P(t),\label{eq:exact_mean_cov}
\end{align}
with $\Delta_Y(t)\succeq 0$ and $\Delta_P(t)\succeq 0$. 
\end{lemma}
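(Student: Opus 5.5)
We will derive the mean dynamics from the semimartingale (Dynkin) decomposition of the jump processes. The sensor counting processes $N_j$ are Poisson with deterministic intensities $\lambda_j(t)$. For any sufficiently regular function $F$ of $Y$, the compensated sums
\[
\sum_{k\le N(t)}\bigl[F(Y(\tau_k))-F(Y(\tau_k^-))\bigr]-\int_0^t\sum_j \lambda_j(s)\bigl[F(Y(s)+S_j)-F(Y(s))\bigr]ds
\]
are martingales. Taking conditional expectations given $Y(0)$ and differentiating in $t$ then gives the mean dynamics. Assumption~\ref{asm:finite_moment} is what justifies this. With $\lambda_j\le\bar\lambda_j$, it makes the local martingales true martingales and lets us exchange expectation with integration and differentiation, first in integral form and then a.e.\ in $t$.

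\textbf{Information form.} Take $F(Y)=Y$. The flow part contributes $\psi(Y)$ and the jump part contributes $\sum_j\lambda_j S_j$ exactly. The latter is additive and state-independent, so its mean is itself. This gives
\[
\dot{\bar Y}=\mathbb E[\psi(Y)\mid Y(0)]+\sum_j\lambda_jS_j.
\]
Since $\psi$ is affine in $Y$ apart from the quadratic term $-YQY$,
\[
\mathbb E[\psi(Y)]=\psi(\bar Y)-\Delta_Y,\qquad \Delta_Y\triangleq\mathbb E[YQY]-\bar YQ\bar Y=\mathbb E\bigl[(Y-\bar Y)Q(Y-\bar Y)\bigr]\succeq0,
\]
because $Q\succeq0$. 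This is finite thanks to the second-moment bound $\mathbb E\|Y\|_F^2<\infty$.

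\textbf{Covariance form.} Take $F(P)=P$. The flow contributes $AP+PA^\top+Q$, which is linear, so its mean is $A\bar P+\bar PA^\top+Q$. The jump compensator is $-\sum_j\lambda_j\,\mathbb E[g_j(P)]$. We therefore set
\[
\Delta_P\triangleq\sum_j\lambda_j\bigl(\mathbb E[g_j(P)]-g_j(\bar P)\bigr),
\]
and it remains to show $\mathbb E[g_j(P)]\succeq g_j(\bar P)$. This is a matrix Jensen inequality, so we need $g_j$ to be matrix convex on $\mathbb S_{+}^n$. To establish this, write
\[
g_j(X)=X-\bigl(X^{-1}+S_j\bigr)^{-1}
\]
on $\mathbb S_{++}^n$, or equivalently
\[
g_j(X)=\max_K\bigl[XH_j^\top K^\top+KH_jX-K(H_jXH_j^\top+R_j)K^\top\bigr]
\]
in the Loewner order. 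The second expression is a pointwise supremum of affine functions of $X$, attained at the Kalman gain. Hence $g_j$ is Loewner-convex, and matrix Jensen (via the supporting affine minorant at $\bar P$) yields $\mathbb E[g_j(P)]\succeq g_j(\bar P)$. Integrability follows from $0\preceq g_j(P)\preceq P$ together with $\mathbb E\|P\|_F<\infty$. Since $\lambda_j\ge0$, we conclude $\Delta_P\succeq0$.

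\textbf{Main obstacle.} The convexity step is the delicate point. We must check that the supremum representation holds, using the Schur-complement identity with $K^\star=XH_j^\top(H_jXH_j^\top+R_j)^{-1}$. For Jensen, we then take the affine minorant $L(X)$ obtained by fixing $K=K^\star(\bar P)$: it satisfies $g_j(X)\succeq L(X)$ for all $X$ and $L(\bar P)=g_j(\bar P)$. Taking expectations gives $\mathbb E[g_j(P)]\succeq \mathbb E[L(P)]=L(\bar P)=g_j(\bar P)$, which avoids any general Jensen theorem for operator-convex maps. The remaining technical issue, interchanging $d/dt$ with $\mathbb E$, is handled by stating the identities in integral form and invoking absolute continuity.
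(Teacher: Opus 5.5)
Your proposal is correct and follows the paper's proof essentially step for step. The paper also applies Dynkin's formula to the linear test function, uses the centered decomposition $\mathbb E[YQY\mid Y(0)]=\bar YQ\bar Y+\mathbb E[\widetilde YQ\widetilde Y\mid Y(0)]$ to get $\Delta_Y\succeq0$, and obtains $\Delta_P$ from a Jensen inequality for $g_j$, with integrability from $0\preceq g_j(P)\preceq P$. The only difference is in the convexity step: the paper cites Loewner convexity of $g_j$ from Ahdab et al.\ and then invokes conditional Jensen, whereas you prove it directly via the Kalman-gain variational representation and the affine minorant at $K^\star(\bar P)$. That substitute is valid and self-contained, and it also handles the conditioning cleanly, since the minorant's coefficients are $\sigma(P(0))$-measurable.
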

Here, $\Delta_Y(t)$ and $\Delta_P(t)$ are Jensen-gap terms, and the jump map $g_j(\cdot)$ is defined in Section~\ref{subsec:problem_setup_cdkf}. The derivation of these identities and Jensen-gap terms is provided in the Appendix. In both domains, the corresponding surrogate is obtained by dropping the Jensen correction from the exact conditional-mean dynamics. Thus, alongside the information-form surrogate~\eqref{eq:info_rde_mf}, we consider the covariance-form surrogate dynamics of~\cite{ahdab2025optimal}, obtained by omitting $\Delta_P(t)$ from~\eqref{eq:exact_mean_cov}:
\begin{equation}
    \dot \pmf(t) = A\pmf(t)+\pmf(t) A^\top+Q-\sum_{j=1}^M\lambda_j(t) g_j\left(\pmf(t)\right).\label{eq:cov_rde_mf}
\end{equation}

The next result shows that these one-sided correction terms induce comparison error dynamics whose negative-semidefinite cone is forward invariant in both the information and covariance domains under a common schedule.
\begin{theorem}\label{thm:complement_comparison}
Let $\ymf(t)$ satisfy~\eqref{eq:info_rde_mf} and $\pmf(t)$ satisfy~\eqref{eq:cov_rde_mf} under the common initial conditions $\ymf(0)=Y(0)=P(0)^{-1}$ and $\pmf(0)=P(0)$.
Then, for all $t\in[0,T]$
    \begin{equation}
        \ymf(t)\succeq \bar Y(t),\quad \pmf(t)\succeq \bar P(t).
    \end{equation}
\end{theorem}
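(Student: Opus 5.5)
We treat both inequalities the same way, as comparison results for matrix Riccati-type ODEs. The tool is Lemma~\ref{lem:exact_means}. It shows that the exact conditional means obey the surrogate dynamics with an extra negative-semidefinite forcing ($-\Delta_Y$ and $-\Delta_P$). Throughout, the schedule $\lamvect\in\mathcal U$ is fixed.

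For the information domain, define the error $E_Y(t)\triangleq \bar Y(t)-\ymf(t)$, with $E_Y(0)=0$. Subtracting \eqref{eq:info_rde_mf} from \eqref{eq:exact_mean_info} cancels the additive terms $\sum_j\lambda_j S_j$, since they are identical and state independent. The quadratic term factors as
\[
\bar Y Q\bar Y-\ymf Q\ymf=\tfrac12\big(E_Y Q(\bar Y+\ymf)+(\bar Y+\ymf)QE_Y\big).
\]
The error therefore satisfies the linear ODE
\[
\dot E_Y=-\tilde A^\top E_Y-E_Y\tilde A-\Delta_Y,\qquad \tilde A(t)\triangleq A+\tfrac12 Q(\bar Y(t)+\ymf(t)).
\]
Let $\Phi(t,s)$ be the transition matrix of $\dot x=-\tilde A(t)^\top x$; it is well defined because $\tilde A$ is locally integrable. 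Variation of constants then gives
\[
E_Y(t)=-\int_0^t\Phi(t,s)\Delta_Y(s)\Phi(t,s)^\top ds\preceq 0,
\]
because each congruence $\Phi\Delta_Y\Phi^\top\succeq0$. This yields $\ymf(t)\succeq\bar Y(t)$.

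For the covariance domain, set $E_P\triangleq\bar P-\pmf$, with $E_P(0)=0$. The new difficulty is that the jump map $g_j$ is nonlinear and is multiplied by $\lambda_j$. I would write $X-g_j(X)=X-XH_j^\top(H_jXH_j^\top+R_j)^{-1}H_jX$, which is the Kalman update map. I would then use the standard identity for the difference of two Kalman updates:
\[
(X_1-g_j(X_1))-(X_2-g_j(X_2))=(I-K_{1}H_j)(X_1-X_2)(I-K_{2}H_j)^\top,
\]
where $K_i$ is the gain at $X_i$. It can be checked by expanding, or by writing $X-g_j(X)=(I-KH)X(I-KH)^\top+KRK^\top$ and minimizing over $K$. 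Hence $g_j(\bar P)-g_j(\pmf)=E_P-L_{1j}E_PL_{2j}^\top$ with $L_{ij}=I-K_{ij}H_j$, and
\[
\dot E_P=AE_P+E_PA^\top-\sum_j\lambda_j\big(E_P-L_{1j}E_PL_{2j}^\top\big)-\Delta_P.
\]
This linear map is not a congruence, so I would symmetrize. Choose a symmetric factorization, for example via the midpoint version $\tfrac12(L_{1j}E L_{2j}^\top+L_{2j}E L_{1j}^\top)$, which is valid since $E_P$ and the left side are symmetric. The resulting operator $\mathcal L(t)E=\tilde A E+E\tilde A^\top+\sum_j\lambda_j\,\mathrm{sym}(L_{1j}EL_{2j}^\top)$ is then quasi-monotone on $\mathbb S^n$, i.e.\ the PSD cone is forward invariant.

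The main obstacle is that $\mathrm{sym}(L_1EL_2^\top)$ with $L_1\ne L_2$ need not preserve the PSD cone. My first attempt is to avoid the two-gain form entirely. Instead, I would use the concavity and monotonicity of $X\mapsto X-g_j(X)$, which is an infimum of affine maps in $X$. Take $K^\star$ to be the optimal gain at $\pmf$. Then
\[
\bar P-g_j(\bar P)\preceq(I-K^\star H_j)\bar P(I-K^\star H_j)^\top+K^\star R_jK^\star{}^\top,
\]
with equality at $\pmf$. This gives a differential inequality
\[
\dot E_P\preceq \hat A E_P+E_P\hat A^\top+\sum_j\lambda_j(I-K^\star_jH_j)E_P(I-K^\star_jH_j)^\top-\Delta_P,
\qquad \hat A\triangleq A-\tfrac12\textstyle\sum_j\lambda_j I,
\]
whose right-hand operator is a sum of a Lyapunov term and congruences, hence cone-preserving. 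A standard comparison theorem for such quasi-monotone linear matrix ODEs, applied with $E_P(0)=0$, then gives $E_P(t)\preceq0$, i.e.\ $\pmf\succeq\bar P$. Boundedness of $\lambda_j$ together with Assumption~\ref{asm:finite_moment} ensures all coefficients are integrable and that $\bar Y$ and $\bar P$ are absolutely continuous, so the Carathéodory solutions above are justified.
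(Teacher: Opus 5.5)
Your proof is correct, but it takes a genuinely different route from the paper.

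\textbf{The paper's route.} The paper keeps the nonlinear error dynamics and applies Nagumo's tangent-cone criterion to $\mathbb S_-^n$. It checks only that $v^\top\dot E\,v\le0$ for $v\in\ker E$. In the information domain the quadratic terms cancel on $\ker E_Y$ because $\bar Y v=\ymf v$. In the covariance domain the only fact used is that $X\mapsto(H_jXH_j^\top+R_j)^{-1}$ reverses the Loewner order.

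\textbf{Your route.} You reduce both comparisons to linear problems.
In the information domain, the exact factorization of $\bar YQ\bar Y-\ymf Q\ymf$ turns the error equation into a Lyapunov equation. Variation of constants then gives $E_Y(t)=-\int_0^t\Phi(t,s)\Delta_Y(s)\Phi(t,s)^\top ds\preceq0$.
In the covariance domain, you use Loewner concavity of the Kalman update, namely the Joseph form with the gain frozen at $\pmf$. This yields $\dot E_P\preceq\mathcal L(t)E_P-\Delta_P$, where $\mathcal L$ is a Lyapunov term plus congruences weighted by $\lambda_j\ge0$, so its evolution operator $\mathcal T(t,s)$ is positive. Your ``standard comparison theorem'' is then just variation of constants again, and it is worth writing it that way: with $G\triangleq\dot E_P-\mathcal L E_P\preceq0$, one has $E_P(t)=\int_0^t\mathcal T(t,s)G(s)\,ds\preceq0$.

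\textbf{What each buys.} The paper's argument is shorter and treats both domains with one uniform boundary check. Yours needs only linear ODE theory. It therefore avoids the hypotheses that a rigorous use of Nagumo's criterion requires, which the paper invokes only in a trajectory-wise form: sub-tangentiality of the vector field at every boundary point, uniqueness, and measurable time dependence. Yours also gives an explicit integral formula for the information-domain gap, which could be used quantitatively.

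\textbf{One clean-up.} Delete the two-gain detour. Your claim that the symmetrized operator built from $\mathrm{sym}(L_{1j}EL_{2j}^\top)$ is quasi-monotone is false in general, as your very next sentence concedes. The final argument never uses it.
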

\begin{proof}
    Both comparison proofs are based on the forward invariance of the negative semidefinite cone under suitable matrix error dynamics. Thus, we will repeatedly use the standard tangent-cone characterization
\begin{align}
    &T_{\mathbb S_-^n}(E) =\\
    &\hskip.3cm \Big\{
        Z\in\mathbb S^n:\;
        \vvect^\top Z \vvect\le0\ \text{for all } \vvect\in\ker E,\;E\in\mathbb S_-^n
    \Big\},\notag
\end{align}
together with Nagumo's invariance criterion \cite{aubin2011viability,BLANCHINI19991747}: if $E(0)\in\mathbb S_-^n$ and $\dot E(t)\in T_{\mathbb S_-^n}(E(t))$ for all $t\in[0,T]$, then $E(t)\in\mathbb S_-^n$ for all $t\in[0,T]$.

\textit{(i) Information-form surrogate optimism.}
    Define the error $E_Y(t)\triangleq \bar Y(t)-\ymf(t).$
    Subtracting~\eqref{eq:info_rde_mf} from~\eqref{eq:exact_mean_info} gives
    \begin{equation}
        \dot E_Y
        =
        -E_YA-A^\top E_Y
        -
        \big(\bar YQ\bar Y-\ymf Q\ymf\big)
        -
        \Delta_Y,
        \label{eq:error_info}
    \end{equation}
    with $E_Y(0)=0$.
    We verify that $\mathbb S_-^n$ is forward invariant for~\eqref{eq:error_info}.
    Let $t\in[0,T]$ and suppose $E_Y(t)\preceq0$. For any $\vvect\in\ker E_Y(t)$, we have
    \begin{equation*}
                E_Y(t)\vvect=0
        \quad\Longrightarrow\quad
        \bar Y(t)\vvect=\ymf(t)\vvect.
    \end{equation*}
    Therefore
    $
        \vvect^\top\!\big(-E_YA-A^\top E_Y\big)\vvect=0
    $
    and
    \begin{align*}
        \vvect^\top\!\big(\bar YQ\bar Y-\ymf Q\ymf\big)\vvect
        =&
        (\bar Y \vvect)^\top Q(\bar Y \vvect)
        \\&-(\ymf \vvect)^\top Q(\ymf \vvect)=0.
    \end{align*}
    Since $\Delta_Y(t)\succeq0$, we obtain $\vvect^\top \dot E_Y(t) \vvect
        =
        -\vvect^\top\Delta_Y(t)\vvect
        \le0.$
    Hence $\dot E_Y(t)\in T_{\mathbb S_-^n}(E_Y(t))$ whenever $E_Y(t)\preceq0$. By Nagumo's invariance criterion, $E_Y(t)\preceq0$ for all $t\in[0,T]$. 
    Equivalently, $\ymf(t)\succeq \bar Y(t), $ for all $t\in[0,T]$.
    
\textit{(ii) Covariance-form surrogate conservatism.}
   Defining $E_P(t)\triangleq \bar P(t)-\pmf(t)$. Subtracting~\eqref{eq:cov_rde_mf} from~\eqref{eq:exact_mean_cov} gives
    \begin{equation}
        \dot E_P
        =
        AE_P+E_PA^\top
        -
        \sum_{j=1}^M \lambda_j(t)
        \big(
            g_j(\bar P)-g_j(\pmf)
        \big)
        -
        \Delta_P,
        \label{eq:error_cov}
    \end{equation}
    with $E_P(0)=0.$
    We again verify forward invariance of $\mathbb S_-^n$. Let $t\in[0,T]$ and suppose $E_P(t)\preceq0$. Then $\bar P(t)\preceq \pmf(t)$, and hence
    \begin{equation*}
        H_j\bar P(t)H_j^\top+R_j
        \preceq
        H_j\pmf(t)H_j^\top+R_j.
    \end{equation*}
    Since inversion reverses Loewner order on $\mathbb S_{++}$, it follows that
    \begin{equation*}
        \big(H_j\bar P(t)H_j^\top+R_j\big)^{-1}
        \succeq
        \big(H_j\pmf(t)H_j^\top+R_j\big)^{-1}.
    \end{equation*}
    Now let $\vvect\in\ker E_P(t)$. Then
    \begin{equation*}
        E_P(t)\vvect=0
        \quad\Longrightarrow\quad
        \bar P(t)\vvect=\pmf(t)\vvect=:\wvect,
    \end{equation*}
    and also $\vvect^\top(AE_P+E_PA^\top)\vvect=0$. Therefore,
    \begin{align*}
        \vvect^\top g_j(\bar P(t))\vvect
        &=
        (H_j\wvect)^\top
        \big(H_j\bar P(t)H_j^\top+R_j\big)^{-1}
        (H_j\wvect)
        \\
        &\ge
        (H_j\wvect)^\top
        \big(H_j\pmf(t)H_j^\top+R_j\big)^{-1}
        (H_j\wvect)
        \\
        &=
        \vvect^\top g_j(\pmf(t))\vvect.
    \end{align*}
    Since $\Delta_P(t)\succeq0$, it follows from~\eqref{eq:error_cov} that
    \begin{equation*}
        \vvect^\top \dot E_P(t)\vvect \le 0
        \qquad\text{for all } \vvect\in\ker E_P(t).
    \end{equation*}
    Hence $\dot E_P(t)\in T_{\mathbb S_-^n}(E_P(t))$ whenever $E_P(t)\preceq0$. By Nagumo's invariance criterion, $E_P(t)\preceq0$ for all $t\in[0,T]$. Equivalently, $\bar P(t)\preceq \pmf(t)$ for all $t\in[0,T]$.
\end{proof}

A related covariance-side conservatism statement appears in~\cite{ahdab2025optimal}. However, the proof strategy used there does not directly yield the schedule-wise comparison needed in our setting. We therefore give a separate invariance-based proof in our notation by verifying the tangent-cone condition for the error dynamics as in the information form analysis.

\subsection{Deterministic Bracketing and Objective Bounds}
The previous two comparisons are most useful when combined. The next theorem turns the information-domain optimism and covariance-domain conservatism into a deterministic bracket for the conditional mean covariance trajectory under a fixed schedule.

\begin{lemma}[Conditional Jensen Inequality for Inversion]
    \label{lem:jensen_inverse}
    Let $X$ be an $\mathbb S_{++}^n$-valued random matrix such that $\mathbb E[\|X\|+\|X^{-1}\|\mid \mathcal G]<\infty$ for some sub-$\sigma$-algebra $\mathcal G$. Then
    \begin{equation}
        \big(\mathbb E[X\mid \mathcal G]\big)^{-1}
        \preceq
        \mathbb E[X^{-1}\mid \mathcal G]
        \qquad\text{a.s.}
        \label{eq:jensen_inverse}
    \end{equation}
\end{lemma}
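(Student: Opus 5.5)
We will prove the conditional operator Jensen inequality through the Schur-complement characterization of positive semidefiniteness. For any $X\in\mathbb S_{++}^n$, the block matrix
\begin{equation*}
    \begin{bmatrix} X & I\\ I & X^{-1}\end{bmatrix}\succeq 0
\end{equation*}
holds, since its Schur complement with respect to the $(1,1)$ block is $X^{-1}-I X^{-1} I=0\succeq0$. This gives a pointwise (for every $\omega$) positive semidefinite random block matrix, which is affine in the pair $(X,X^{-1})$.

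The first step is to take conditional expectations of this block matrix given $\mathcal G$. The integrability assumption $\mathbb E[\|X\|+\|X^{-1}\|\mid\mathcal G]<\infty$ ensures that the conditional expectations $\bar X\triangleq\mathbb E[X\mid\mathcal G]$ and $\widetilde X\triangleq\mathbb E[X^{-1}\mid\mathcal G]$ exist entrywise. We then need that conditional expectation preserves the cone $\mathbb S_+^{2n}$ almost surely. We will argue this by testing against a fixed vector $u\in\mathbb Q^{2n}$: the scalar $u^\top(\cdot)u$ is nonnegative, so its conditional expectation is nonnegative a.s. A countable intersection over rational $u$ plus continuity then gives
\begin{equation*}
    \begin{bmatrix} \bar X & I\\ I & \widetilde X\end{bmatrix}\succeq0\quad\text{a.s.}
\end{equation*}
We take versions of $\bar X,\widetilde X$ that are symmetric.

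Next we need $\bar X\succ0$ a.s. so that the Schur complement can be formed. As in the paper's remark after Assumption~\ref{asm:finite_moment}, for each rational $u\neq0$ we have $u^\top\bar Xu=\mathbb E[u^\top Xu\mid\mathcal G]>0$ a.s. Strict positivity uniformly over all $u$ needs slightly more care than this countable argument alone. We will obtain it from the block inequality itself: if $\bar X v=0$ for some $v\neq0$, then testing the block matrix with $u=(sv,\,w)$ gives $2s\,v^\top w+w^\top\widetilde Xw\ge0$ for all $s\in\mathbb R$, which forces $v^\top w=0$ for all $w$, a contradiction. Hence $\bar X\succ0$ a.s.

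With $\bar X\succ0$, the Schur complement of the averaged block matrix yields $\widetilde X-\bar X^{-1}\succeq0$, which is exactly the claimed inequality. The main technical obstacle is the measure-theoretic handling of "a.s. for all vectors simultaneously", namely choosing regular versions and using countable dense sets. This is routine, and the nondegeneracy trick above sidesteps any need for a separate strict-positivity argument.
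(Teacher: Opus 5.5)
Your proof is correct, but it takes a different route from the paper. The paper's proof is a one-liner. It invokes operator convexity of $X\mapsto X^{-1}$ on $\mathbb S_{++}^n$ and then applies a conditional Jensen inequality to the regular conditional law of $X$ given $\mathcal G$. Both facts are treated as black boxes. The argument also relies on $\mathbb E[X\mid\mathcal G]\in\mathbb S_{++}^n$ a.s., so that the left-hand side is defined, which the paper only asserts in passing after Assumption~\ref{asm:finite_moment}.

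You instead linearize the problem. The epigraph $\{(X,Z): Z\succeq X^{-1}\}$ is exactly the set where $\begin{bmatrix} X & I\\ I & Z\end{bmatrix}\succeq 0$, and this condition is affine in $(X,Z)$. So you only need linearity and positivity of conditional expectation, plus a countable dense set of test vectors. No regular conditional distribution and no operator-Jensen theorem are required. Your range argument with $u=(sv,w)$ is a further gain: it derives the a.s. invertibility of $\mathbb E[X\mid\mathcal G]$ from the integrability of $X^{-1}$. This fills a step the paper leaves implicit.

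What the paper's route buys is brevity and generality, since it applies verbatim to any operator convex function with suitable integrability. Your Schur-complement argument is specific to inversion, or more generally to functions whose matrix epigraph has an LMI representation, but it is self-contained and elementary.
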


\begin{proof}
    The map $X\mapsto X^{-1}$ is operator convex on $\mathbb S_{++}^n$. Applying conditional Jensen's inequality to the regular conditional law of $X$ given $\mathcal G$ yields~\eqref{eq:jensen_inverse}.
\end{proof}

\begin{theorem}[Schedule-wise Deterministic Bracket]\label{thm:bracketing}
    Let $\ymf(t)$ satisfy~\eqref{eq:info_rde_mf} and $\pmf(t)$ satisfy~\eqref{eq:cov_rde_mf} under the common initial conditions $\ymf(0)=Y(0)=P(0)^{-1}$ and $\pmf(0)=P(0)$.
Then, for all $t\in[0,T]$
    \begin{equation}
        \ymf(t)^{-1}\preceq \bar P(t)\preceq \pmf(t).
    \end{equation}
\end{theorem}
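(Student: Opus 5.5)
The upper bound $\bar P(t)\preceq \pmf(t)$ is exactly part (ii) of Theorem~\ref{thm:complement_comparison}, so only the lower bound $\ymf(t)^{-1}\preceq \bar P(t)$ needs an argument. I will obtain it by chaining three Loewner-order inequalities, all holding for every $t\in[0,T]$:
\[
\ymf(t)^{-1}\preceq \bar Y(t)^{-1}\preceq \mathbb E[Y(t)^{-1}\mid Y(0)]=\bar P(t).
\]

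\textbf{First inequality.} Part (i) of Theorem~\ref{thm:complement_comparison} gives $\ymf(t)\succeq \bar Y(t)$. Both matrices lie in $\mathbb S_{++}^n$: for $\ymf(t)$ this follows from the Riccati argument at the end of Section~\ref{sec:method}, and for $\bar Y(t)$ from the remark preceding Lemma~\ref{lem:exact_means}. Since inversion reverses the Loewner order on $\mathbb S_{++}^n$, we get $\ymf(t)^{-1}\preceq \bar Y(t)^{-1}$.

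\textbf{Second inequality.} Apply Lemma~\ref{lem:jensen_inverse} with $X=Y(t)$ and $\mathcal G=\sigma(Y(0))$. The integrability hypothesis $\mathbb E[\|Y(t)\|+\|Y(t)^{-1}\|\mid Y(0)]<\infty$ follows from Assumption~\ref{asm:finite_moment}. Indeed, $\mathbb E\|Y\|_F\le (\mathbb E\|Y\|_F^2)^{1/2}$ by conditional Jensen/Cauchy--Schwarz, the assumption bounds $\mathbb E\|P\|_F$ directly, and all matrix norms are equivalent. This yields $\bar Y(t)^{-1}\preceq \mathbb E[P(t)\mid Y(0)]$ almost surely.

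\textbf{Final identification.} Since $\sigma(Y(0))=\sigma(P(0))$, as noted before Lemma~\ref{lem:exact_means}, we have $\mathbb E[P(t)\mid Y(0)]=\bar P(t)$.

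The main point needing care is the second inequality. The statement is pointwise in $t$ and holds almost surely, and the conditional expectations are only defined up to null sets. Two remarks handle this. First, $\ymf$ is deterministic given $Y(0)$, so the chain holds a.s. for each fixed $t$. Second, if a statement for all $t$ simultaneously is desired, it suffices to work on a countable dense set of times and use continuity of $\ymf$ and $\bar P$ in $t$. Continuity of $\bar P$ in $t$ follows from the ODE in Lemma~\ref{lem:exact_means}. Everything else is a direct citation of earlier results.
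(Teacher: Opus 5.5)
Your proof is correct and follows the paper's argument essentially verbatim. The upper bound comes from part (ii) of Theorem~\ref{thm:complement_comparison}, and the lower bound chains order reversal under inversion, Lemma~\ref{lem:jensen_inverse} with $\mathcal G=\sigma(Y(0))$, and the identification $\sigma(Y(0))=\sigma(P(0))$, with your explicit check that Assumption~\ref{asm:finite_moment} supplies the needed integrability being a small detail the paper leaves implicit.
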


\begin{proof}
    By Theorem~\ref{thm:complement_comparison}, $  \ymf(t)\succeq \bar Y(t)$ and $\pmf(t)\succeq \bar P(t)$. The upper bound is immediate. For the lower bound, since inversion reverses Loewner order on $\mathbb S_{++}^n$, we have $\ymf(t)^{-1}\preceq \bar Y(t)^{-1}$.
    Applying Lemma~\ref{lem:jensen_inverse} with $X=Y(t)$ and $\mathcal G=\sigma(Y(0))$ gives
    \begin{equation*}
                \bar Y(t)^{-1}
        =
        \big(\mathbb E[Y(t)\mid Y(0)]\big)^{-1}
        \preceq
        \mathbb E[Y(t)^{-1}\mid Y(0)].
    \end{equation*}
    Since $Y(t)^{-1}=P(t)$ almost surely and $\sigma(Y(0))=\sigma(P(0))$,
    \begin{equation*}
                \mathbb E[Y(t)^{-1}\mid Y(0)]
        =
        \mathbb E[P(t)\mid P(0)]
        =
        \bar P(t).
    \end{equation*}
    Combining both gives the desired result.
\end{proof}

For planning, the practical value of the bracket is that it yields objective-level bounds. The next corollary transfers the trajectory-wise matrix ordering to the estimation cost and provides deterministic lower and upper bounds on the expected performance of a fixed schedule.
\begin{corollary}[Schedule-wise Performance Bound]
\label{cor:objective_bracketing}
For any admissible deterministic schedule $\lamvect\in\mathcal U$, let
$\ymf^{\lamvect}$ be the information-form surrogate trajectory,
$P^{\lamvect}$ the exact stochastic covariance process, and
$\pmf^{\lamvect}$ the covariance-form surrogate trajectory
driven by the same schedule. Define
    $P_{\mathrm{info}}^{\lamvect}(t)
    \triangleq
    \big(\ymf^{\lamvect}(t)\big)^{-1}$.
Then
\begin{equation}
    J\!\left(P_{\mathrm{info}}^{\lamvect}\right)
    \;\le\;
    \mathbb E\!\left[J\!\left(P^{\lamvect}\right)\mid P(0)\right]
    \;\le\;
    J\!\left(\pmf^{\lamvect}\right).
    \label{eq:objective_bracket}
\end{equation}
\end{corollary}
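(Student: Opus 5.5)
The plan is to transfer the matrix bracket of Theorem~\ref{thm:bracketing} to the scalar objective $J$ in~\eqref{eq:obj_pathwise}, using two ingredients: monotonicity of $X\mapsto\langle W,X\rangle_F$ under the Loewner order when $W\succeq0$, and interchange of conditional expectation with the time integral.

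\textbf{Step 1 (monotonicity).} For $W\in\mathbb S_+^n$ and $X\preceq Z$, we have $\langle W,Z-X\rangle_F=\mathrm{tr}\big(W^{1/2}(Z-X)W^{1/2}\big)\ge0$. Applying this pointwise in $t$ with $W(t)$, and at $T$ with $W_T$, the bracket $P_{\mathrm{info}}^{\lamvect}(t)\preceq\bar P(t)\preceq\pmf^{\lamvect}(t)$ from Theorem~\ref{thm:bracketing} gives
\[
\langle W(t),P_{\mathrm{info}}^{\lamvect}(t)\rangle_F\le\langle W(t),\bar P(t)\rangle_F\le\langle W(t),\pmf^{\lamvect}(t)\rangle_F ,
\]
and the same chain holds at $t=T$ with $W_T$. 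Integrating over $[0,T]$ and adding the terminal terms yields
\[
J(P_{\mathrm{info}}^{\lamvect})\le J(\bar P)\le J(\pmf^{\lamvect}).
\]

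\textbf{Step 2 (identify $J(\bar P)$ with the expected cost).} By linearity of the trace pairing and of conditional expectation, $\langle W(t),\bar P(t)\rangle_F=\mathbb E[\langle W(t),P(t)\rangle_F\mid P(0)]$; the terminal term is handled the same way. For the integral term, I would invoke the conditional Fubini--Tonelli theorem to write
\[
\mathbb E\Big[\int_0^T\langle W(t),P(t)\rangle_F\,dt\,\Big|\,P(0)\Big]=\int_0^T\mathbb E\big[\langle W(t),P(t)\rangle_F\mid P(0)\big]\,dt .
\]
This step needs the integrand to be nonnegative, which holds because $W(t)\succeq0$ and $P(t)\succ0$. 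It also needs joint measurability of $(t,\omega)\mapsto P(t,\omega)$, which follows because $P$ is piecewise continuous (càdlàg) in $t$, with jumps at the arrival times. Using nonnegativity means Tonelli applies without any integrability hypothesis. Assumption~\ref{asm:finite_moment} ensures each term is finite, so $\bar P$ is well defined, and the upper bound from Step~1 then shows the expected cost is finite.

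\textbf{Main obstacle.} The hard part is Step~2, specifically making the conditional Tonelli exchange rigorous. I would handle it through a regular conditional distribution given $P(0)$, or, since $P(0)$ is deterministic in this setting, simply through the unconditional Tonelli theorem. I would also need measurability of $t\mapsto W(t)$; I would assume $W(\cdot)$ is measurable and integrable so that every term in $J$ is well defined. The remaining argument is routine.
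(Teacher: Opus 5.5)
Your proposal is correct and follows essentially the same route as the paper. The paper's proof also applies Loewner monotonicity of $\langle W,\cdot\rangle_F$ for $W(t),W_T\succeq0$ to the bracket of Theorem~\ref{thm:bracketing}, integrates over $[0,T]$, and interchanges conditional expectation with the integral. Your Tonelli/regular-conditional-distribution argument (nonnegative integrand, joint measurability from c\`adl\`ag paths) simply makes rigorous the interchange that the paper attributes to ``linearity of conditional expectations.''
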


\begin{proof}
By Theorem~\ref{thm:bracketing}, $   P_{\mathrm{info}}^{\lamvect}(t)
    =
    \big(\ymf^{\lamvect}(t)\big)^{-1}
    \preceq
    \bar P^{\lamvect}(t)
    \preceq
    \pmf^{\lamvect}(t),$
and $\bar P^{\lamvect}(t)\triangleq
    \mathbb E[P^{\lamvect}(t)\mid P(0)]$.
Since $W(t),W_T\succeq0$, Loewner ordering implies
\begin{equation*}
    \langle W(t),P_{\mathrm{info}}^{\lamvect}(t)\rangle_F \leq \langle W(t),\bar P^{\lamvect}(t)\rangle_F \leq \langle W(t),\pmf^{\lamvect}(t)\rangle_F
\end{equation*}
for a.e. $t$, and similarly at the terminal time $T$.
Integrating over $[0,T]$ and using the linearity of conditional expectations yields~\eqref{eq:objective_bracket}. 
\end{proof}

A consequence of Corollary~\ref{cor:objective_bracketing} is that if $P(0)$ is deterministic, then $J(P_{\mathrm{info}}^{\lamvect}) \le \mathbb E[J(P^{\lamvect})] \le J(\pmf^{\lamvect})$.
Thus, we can apply this directly to the planning problem in Section~\ref{sec:problem_setup}, which was posed with deterministic $P_0$. 
The comparison results in this section were stated conditionally so that the arguments remain valid in the slightly more general case of random initial covariance.

\section{Experiments}\label{sec:experiments}

\begin{figure}[t]
    \centering
    \includegraphics[width=0.5\textwidth]{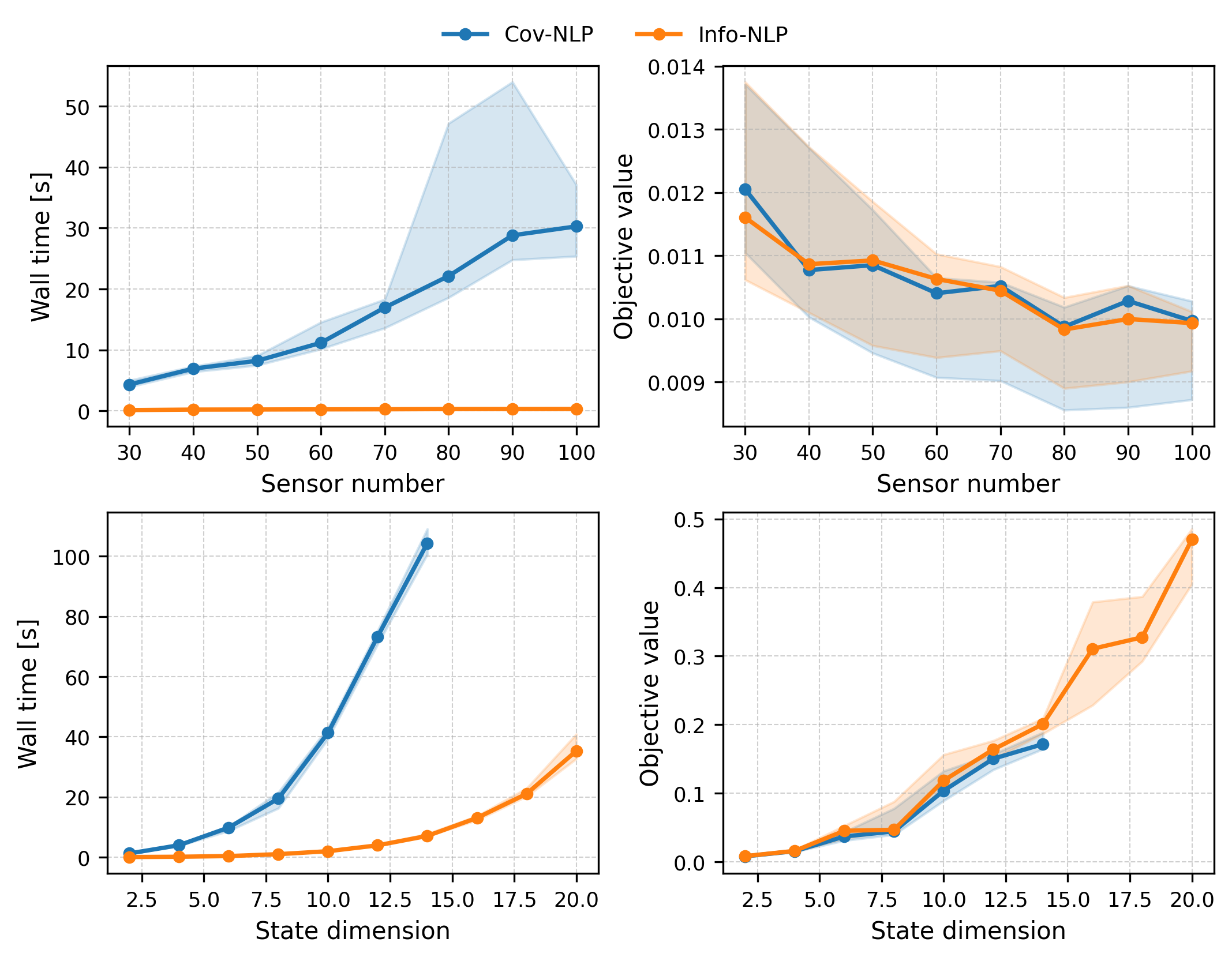}
    \caption{Performance comparison of Cov-NLP and Info-NLP under sensor-number and state-dimension sweeps. Left: IPOPT wall time. Right: MC estimate for the objective. Top: sensor-number scaling. Bottom: state-dimension scaling. Shaded bands denote the interquartile range over 10 repetitions.}
    \label{fig:performance_composite}
\end{figure}

\begin{figure}
    \centering
    \includegraphics[width=1.0\linewidth]{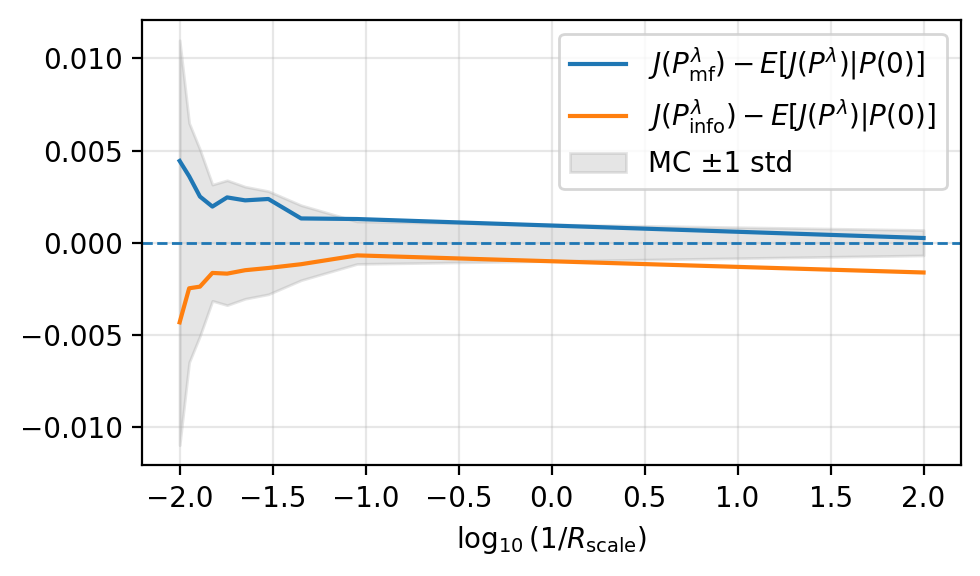}
    \caption{Objective-level validation of Corollary~\ref{cor:objective_bracketing} over the $R$-sweep. The orange and blue curves show the lower- and upper-surrogate objective deviations relative to the MC estimated objective. For illustration, the shaded band shows the Monte Carlo $\pm1$ sample standard-deviation across stochastic arrival realizations.}
    \label{fig:bracket}
\end{figure}

\subsection{Experimental Setup}

We compare \textit{Info-NLP}, based on the information-form surrogate~\eqref{eq:info_rde_mf}, with \textit{Cov-NLP}, based on the covariance-form surrogate~\eqref{eq:cov_rde_mf} (see~\cite{ahdab2025optimal}). Both continuous-time problems are transcribed using the same Euler discretization, Cholesky parametrization, and IPOPT implementation~\cite{wachter2006implementation}, with a $120$\,s time limit. We consider the terminal covariance minimization problem~\eqref{prob:continuous} with $W(t)\equiv 0$ and $W_T=I$ over the horizon $T=3$ using $N=30$ control intervals, under the stagewise budget constraint $\mathbf{1}^\top \lamvect_k \le 5$.

Random plants are generated with a balanced mix of stable and unstable modes. We set $Q=I_n$ and $P_0=100I_n$. For each sensor, $H_j \in \mathbb{R}^{p_j \times n}$ is generated with orthonormal rows, and the measurement noise covariance $R_j$ is sampled as a random positive-definite matrix with eigenvalues drawn uniformly from $[1,10]$ and a random orthogonal eigenbasis.

\textit{Scalability study.}
We vary either the number of sensors ($n=5$, $M \in \{30,40,\dots,100\}$) or the state dimension ($M=30$, $n \in \{2,4,\dots,20\}$), using $10$ random instances per configuration. We report IPOPT wall-clock time and decompose solver effort into two quantities: the number of Newton iterations and the average KKT-Newton assembly time, where the latter is defined as the per-iteration average of the sum of the Hessian-evaluation, Jacobian-evaluation, and residual-construction times. We also report the Monte Carlo (MC) estimated objective value. Each returned schedule is evaluated on the true CD-KF using a finer rollout grid with $N_{\mathrm{eval}}=300$ and $100$ MC runs.

\textit{Performance bound validation.}
We fix a schedule returned by Info-NLP for a reference instance with $n=5$, $M=30$, $p_j=1$, $P_0=100I$, and $Q=I$, and reuse this schedule over a signal-to-noise ratio (SNR) sweep defined by a common scalar measurement-noise scale $R_{\mathrm{scale}} \in [10^{-2},10^{2}]$. For each scaled system, we roll out the same schedule in the information-form surrogate, the covariance-form surrogate, and Monte Carlo CD-KF simulations, and compare the signed deviations in terminal objective relative to the MC estimated objective value.

For both the scalability study and the performance bound validation, we normalize the cost by the initial trace $\operatorname{tr}(P(0))$. This removes trivial growth induced by parameter scaling and makes reductions in terminal estimation error easier to interpret.

\subsection{Numerical Results}

\textit{Scalability and objective value.}
Figure~\ref{fig:performance_composite} summarizes both scalability and MC estimated performance. Info-NLP is consistently faster over the overlapping range, with the advantage being larger in the sensor-number sweep. This trend becomes clearer when these two components are examined separately. In the sensor-number sweep, the median Cov/Info assembly-time ratio grows from $15.80\times$ at $M{\,=\,}30$ to $33.79\times$ at $M{\,=\,}100$ (geometric mean $23.94\times$), whereas the median iteration-count ratio stays between $2.23\times$ and $2.44\times$ (geometric mean $2.32\times$). In the state-dimension sweep, over the valid overlapping range $n \le 14$, the assembly-time ratio decreases from $22.70\times$ to $4.37\times$ (geometric mean $8.28\times$), while the iteration-count ratio remains between $2.55\times$ and $3.30\times$ (geometric mean $2.98\times$). Thus, the wall-clock advantage is largely consistent with the lower per-iteration cost of derivative and KKT construction in the information form, together with a consistent reduction in iteration count.

At the same time, the MC estimated objective gap remains small. In the sensor-number sweep, the normalized realized costs of Cov-NLP and Info-NLP are nearly indistinguishable. In the state-dimension sweep, the gap remains modest over the overlapping range and becomes visible only at the high-dimensional end, where Cov-NLP achieves a slightly lower MC estimated objective value. This is consistent with the theoretical roles of the two surrogates: Info-NLP optimizes an optimistic lower surrogate, whereas Cov-NLP optimizes a conservative upper surrogate.

\textit{Performance bound validation.}
Figure~\ref{fig:bracket} shows objective-level validation of Corollary~\ref{cor:objective_bracketing} over a wide SNR range. Across the full sweep, the MC estimated objective value lies within the two-sided deterministic bound, and the normalized two-sided bound width remains below roughly $5 \times 10^{-3}$. Because the bounds are computed for the same schedule returned by Info-NLP, the objective predicted by Info-NLP can be interpreted as a deterministic lower bound for that feasible schedule, while the covariance-form rollout provides a deterministic upper bound for the same schedule.

\section{Conclusion}\label{sec:conclusion}
We proposed an information-form surrogate for sensor scheduling in continuous-discrete Kalman filtering with Poisson measurement arrivals. The proposed formulation yields a scalable surrogate for direct transcription and, through the accompanying theoretical analysis, provides computable two-sided performance bounds for a planned schedule under stochastic measurement arrivals. Numerical results show substantial computational savings over the covariance-form surrogate across overlapping problem scales, while retaining comparable realized performance and providing interpretable deterministic bounds on the stochastic CD-KF objective. Future work will consider extending this framework to richer sensing models and broader classes of scheduling constraints.

\section*{Acknowledgments}
This research was supported by the Air Force Office of Scientific Research under award number FA23862514038.

\appendix
\begin{proof}[Proof of Lemma~\ref{lem:exact_means}]\label{proof:lemma1}
For the information process, let $\psi(Y)\triangleq -YA-A^\top Y-YQY$, $\phi_j(Y)\triangleq Y+S_j.$

Since $Y(t)$ is a time-inhomogeneous Markov jump process with drift $\psi$ and jump map $\phi_j$, Dynkin's formula applied entrywise to the linear test function $f(Y)=Y$ gives
\begin{align*}
    \dot{\bar Y}(t)
=&
-\bar Y(t)A-A^\top\bar Y(t)
-\mathbb E[Y(t)QY(t)\mid Y(0)]
\\&+\sum_{j=1}^M \lambda_j(t)S_j.
\end{align*}
Now write $Y(t)=\bar Y(t)+\widetilde Y(t)$ with $\widetilde Y(t)\triangleq Y(t)-\bar Y(t)$. Since $\mathbb E[\widetilde Y(t)\mid Y(0)]=0$, $\mathbb E[Y(t)QY(t)\mid Y(0)]
=
\bar Y(t)Q\bar Y(t)
+
\mathbb E[\widetilde Y(t)Q\widetilde Y(t)\mid Y(0)].$
Defining
\begin{equation*}\Delta_Y(t)\triangleq
    \mathbb E[\widetilde Y(t)Q\widetilde Y(t)\mid Y(0)],
\end{equation*}
we have $\Delta_Y(t)\succeq 0$ because $Q\succeq 0$, which yields~\eqref{eq:exact_mean_info}.

For the covariance process, let $\psi_P(P)\triangleq AP+PA^\top+Q,$  $\phi_{P,j}(P)\triangleq P-g_j(P).$ By the Woodbury identity, $P-g_j(P)=(P^{-1}+S_j)^{-1}\succeq 0,$ so in particular \(0\preceq g_j(P)\preceq P\), and the integrability required by Dynkin's formula follows from Assumption~\ref{asm:finite_moment}. Applying Dynkin's formula entrywise to \(f(P)=P\) gives
\begin{equation}
    \dot{\bar P}(t)=A\bar P(t)+\bar P(t)A^\top+Q-\sum_{j=1}^M \lambda_j(t)\mathbb E[g_j(P(t))|P(0)].
\end{equation}

Since each $g_j$ is Loewner-convex~\cite[Appendix A]{ahdab2025optimal}, $\mathbb E[g_j(P(t))\mid P(0)]\succeq g_j(\bar P(t))$, by conditional Jensen's inequality. Defining
\begin{equation*}
    \Delta_P(t)\triangleq
\sum_{j=1}^M \lambda_j(t)
\Big(
\mathbb E[g_j(P(t))\mid P(0)]-g_j(\bar P(t))
\Big),
\end{equation*}
we obtain $\Delta_P(t)\succeq 0$ and hence~\eqref{eq:exact_mean_cov}.
\end{proof}

\bibliographystyle{IEEEtran}
\bibliography{%
refs%
}

\end{document}